\def\bstctlcite{\@ifnextchar[{\@bstctlcite}{\@bstctlcite[@auxout]}}
\def\@bstctlcite[#1]#2{\@bsphack
  \@for\@citeb:=#2\do{%
    \edef\@citeb{\expandafter\@firstofone\@citeb}%
    \if@filesw\immediate\write\csname #1\endcsname{\string\citation{\@citeb}}\fi}%
  \@esphack}
\newtheorem{lemma}{Lemma}
\newtheorem{corollary}{Corollary}
\newtheorem{remark}{Remark}
\begin{document}

\title{An Improved Cooperative Repair Scheme for Reed-Solomon Codes}

\author{\IEEEauthorblockN{Yaqian Zhang, ~Zhifang Zhang}\\
\IEEEauthorblockA{\fontsize{9.8}{12}\selectfont KLMM, Academy of Mathematics and Systems Science, Chinese Academy of Sciences, Beijing 100190, China\\
 School of Mathematical Sciences, University of Chinese Academy of Sciences, Beijing 100049, China\\
Emails: zhangyaqian15@mails.ucas.ac.cn, ~zfz@amss.ac.cn}

}

\maketitle

\begin{abstract}
  Dau et al. recently extend Guruswami and Wootters' scheme (STOC'2016) to cooperatively repair two or three erasures in Reed-Solomon (RS) codes.
  However, their scheme restricts to either the case that the characteristic of $F$ divides the extension degree $[F\!:\!B]$ or some special failure patterns, where $F$ is the base field of the RS code and $B$ is the subfield of the repair symbols.
  In this paper, we derive an improved cooperative repair scheme that removes all these restrictions. That is, our scheme applies to any characteristic of $F$ and can repair all failure patterns of two or three erasures.
\end{abstract}

\section{Introduction}
Reed-Solomon (RS) codes, as a kind of maximum distance separable (MDS) codes, are extensively used in distributed storage systems for providing the optimal trade-off between redundancy and reliability. For example, a [14,10] RS code is deployed in the file system of Facebook \cite{Sathiamoorthy13}.
Given a finite field $F$, a file represented by a vector in $F^k$ is encoded into a codeword in $F^n$ using an $[n,k]$ RS code. Each coordinate of the codeword is stored in a storage node.
When some node fails, i.e., the symbol stored in that node is erased, a self-sustaining system should be able to repair the failed node by downloading data from the surviving nodes (\emph{helper nodes}). An important metric for the node repair efficiency is the \emph{repair bandwidth}, namely, the total amount of data communicated during the repair process.

A naive repair method for RS codes requires downloading the whole file to recover just one single node. This is wasteful and the repair cost is far more expensive than the optimal repair bandwidth indicated by the cut-set bound \cite{Dimakis07}.
Guruswami and Wootters \cite{Guruswami16} proposed a linear repair scheme for repairing one erasure in RS codes that can significantly reduce the repair bandwidth. The key idea is to perform the repair over a subfield $B$ of $F$, i.e., computing the erased symbol from enough of its traces from $F$ to $B$.
Later, Tamo et al. \cite{Tamo17} defined a class of RS codes over a sufficiently large field $F$, i.e., $[F:B]\approx n^n$, that permits a repair scheme with bandwidth achieving the cut-set bound.

In some scenarios, however, multiple node failures are quite common. There are two typical models for repairing multiple erasures. One is the centralized repair, where a special node called data center is assumed to generate all the replacement nodes. RS codes with centralized repair are considered in \cite{Tamo18, Bartan17}. The other is the cooperative repair \cite{Hu2010}, where the replacement nodes are generated in a distributed and cooperative way. It was proved in \cite{Ye2018} that an MDS code achieving the optimal bandwidth in the cooperative repair mode naturally attains the optimal bandwidth in the centralized repair mode. Moreover, due to the distributed pattern, the cooperative repair model fits the system better than the centralized one. In this paper, we focus on the cooperative repair of RS codes in the case of two or three erasures.

Recently, Dau et al. \cite{Dau18} extend Guruswami and Wootters' scheme \cite{Guruswami16} to cooperatively repair RS codes with two or three erasures. However, for two erasures, their scheme is either restricted to the case that the characteristic of $F$ divides the extension degree $[F:B]$, or requires a sequential recovery of the two erasures. And for three erasures, their scheme only applies to some special failure patterns. In this paper, we develop an improved cooperative repair scheme for RS codes that removes all restrictions when repairing two erasures and can apply to all failure patterns of three erasures.

The remaining of the paper is organized as follows. First, we recall some necessary preliminaries in Section \ref{sec2}. Then our main results -- the repair schemes for two erasures and for three erasures are presented in Section \ref{sec3} and Section \ref{sec4} respectively. Finally, Section \ref{sec5} concludes the paper.

\section{Preliminaries}\label{sec2}
\subsection{Notations and definitions}
Throughout the paper, we use $[n]$ to denote  $\{1,2,...,n\}$. Let $F$ be a finite field and $B$ be a subfield of $F$ with $[F:B]=t$. The elements in $F$ are called symbols and elements in $B$ are called subsymbols.

Let $A=\{\alpha_1,...,\alpha_n\}$ be a set of distinct elements in $F$. An $[n,k]$ RS code with the {\it evaluator set} $A$, denoted by  $\mathrm{RS_A}$, is  defined as
\begin{equation*}\label{RS}
\mathrm{RS_A}=\{(f(\alpha_1),...,f(\alpha_n)):f\in F[x], \mathrm{deg}(f)< k\}.
\end{equation*}

For any $x\in F$, its trace from $F$ to $B$ is defined as
$$
\mathrm{Tr}_{F/B}(x)=x+x^{|B|}+\cdots+x^{|B|^{t-1}}
$$
which is always a subsymbol in $B$. For simplicity, we use $\rm{Tr}$ instead of $\mathrm{Tr}_{F/B}$ to denote the trace function from $F$ to $B$ when the two fields are clear from the context.

\subsection{Guruswami and Wootters' scheme}\label{sec2-2}
Guruswami and Wootters' scheme \cite{Guruswami16} for repairing one erasure in $\mathrm{RS_A}$ relies on two basic facts.

One is that the dual of an RS code is still a generalized RS code, i.e.,
$$
\mathrm{RS_A^\bot}=\{(\lambda_1g(\alpha_1),...,\lambda_ng(\alpha_n)):g\!\in\! F[x], \mathrm{deg}(g)< n\!-\!k\},
$$
where $\lambda_i$, $i\!\in\![n]$, are called {\it multipliers} which are nonzero elements in $F$ and determined from $A$. Hereafter, a polynomial of degree less than $n-k$ is called a check polynomial of the $[n,k]$ RS code.

The other fact is that every element in $F$ can be computed  from its $t$ independent traces as illustrated in the following lemma.
\begin{lemma}\cite{Guruswami16}\label{lemma1}
Suppose $\{\zeta_1,...,\zeta_t\}$ is a basis of $F$ over $B$. Then for every $\gamma\in F$, $\gamma$ can be recovered from the $t$ subsymbols $\{\rm{Tr}(\zeta_{1}\gamma),...,\rm{Tr}(\zeta_{t}\gamma)\}$, i.e.,  $\gamma=\sum_{i=1}^{t}\rm{Tr}(\zeta_{i}\gamma)\zeta_{i}^{\bot}$, where $\{\zeta_{i}^{\bot}\}_{i\in[t]}$ is the dual basis of $\{\zeta_{i}\}_{i\in[t]}$.
\end{lemma}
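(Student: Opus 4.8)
The plan is to exploit the nondegeneracy of the symmetric $B$-bilinear form $\langle x,y\rangle := \mathrm{Tr}(xy)$ on $F$ regarded as a $t$-dimensional vector space over $B$. First I would argue that this form is nondegenerate, i.e.\ that for every nonzero $x\in F$ there is some $y\in F$ with $\mathrm{Tr}(xy)\neq0$. This reduces to showing that $\mathrm{Tr}:F\to B$ is not identically zero, which holds because $\mathrm{Tr}(z)=z+z^{|B|}+\cdots+z^{|B|^{t-1}}$ is a nonzero polynomial of degree $|B|^{t-1}<|F|$ and hence cannot vanish on all of $F$; then for nonzero $x$, scaling a $z$ with $\mathrm{Tr}(z)\neq0$ by $x^{-1}$ gives the desired element $y=x^{-1}z$, since $xy=z$.

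Once nondegeneracy is established, the dual basis $\{\zeta_i^\bot\}_{i\in[t]}$ is well-defined and unique: the Gram matrix $M=(\mathrm{Tr}(\zeta_i\zeta_j))_{i,j\in[t]}$ is invertible over $B$, so one may take $\zeta_j^\bot=\sum_\ell (M^{-1})_{\ell j}\zeta_\ell$, which satisfies $\mathrm{Tr}(\zeta_i\zeta_j^\bot)=\delta_{ij}$, and these $t$ elements again form a basis of $F$ over $B$.

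The reconstruction formula then follows by a one-line computation. Expand $\gamma\in F$ in the dual basis as $\gamma=\sum_{j=1}^t c_j\zeta_j^\bot$ with $c_j\in B$. Applying $\mathrm{Tr}(\zeta_i\,\cdot\,)$ and using $B$-linearity of $\mathrm{Tr}$ together with $\mathrm{Tr}(\zeta_i\zeta_j^\bot)=\delta_{ij}$ gives $\mathrm{Tr}(\zeta_i\gamma)=c_i$ for each $i$, so $\gamma=\sum_{i=1}^t\mathrm{Tr}(\zeta_i\gamma)\zeta_i^\bot$; in particular $\gamma$ is recovered from the $t$ subsymbols $\mathrm{Tr}(\zeta_1\gamma),\dots,\mathrm{Tr}(\zeta_t\gamma)$. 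The only step with any genuine content is the non-vanishing of the trace map (equivalently, nondegeneracy of the trace form); everything after that is elementary linear algebra over $B$, so I do not expect a serious obstacle here.
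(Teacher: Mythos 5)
Your proof is correct. The paper itself gives no proof of this lemma --- it is quoted directly from Guruswami and Wootters with a citation --- so there is nothing to compare against line by line; what you have written is the standard argument: nondegeneracy of the trace form $\mathrm{Tr}(xy)$ (via the degree bound $|B|^{t-1}<|F|$ showing $\mathrm{Tr}$ is not identically zero), existence and uniqueness of the dual basis from invertibility of the Gram matrix, and the one-line expansion $\mathrm{Tr}(\zeta_i\gamma)=c_i$. All three steps are sound, and the $B$-linearity of $\mathrm{Tr}$ that you invoke is the usual consequence of $c^{|B|}=c$ for $c\in B$.
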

Now suppose $n-k\geq|B|^{t-1}$ and one symbol, say $f(\alpha^*)$ for some $\alpha^*\in A$, is erased. To recover $f(\alpha^*)$, first define $t$ polynomials $p_{i}(x)\!= \!\mathrm{Tr}(u_i(x-\alpha^*))/(x-\alpha^*)$, $i
\!\in\![t]$, where $\{u_i\}_{i\in[t]}$ is a basis of $F$ over $B$. It can be seen that ${\rm deg}(p_{i}(x))=|B|^{t-1}-1<n-k$, therefore these $p_i(x)$'s are check polynomials for $\mathrm{RS_A}$, defining
$t$ parity check equations: $\sum_{\alpha\in A}\lambda_{\alpha}p_{i}(\alpha)f(\alpha)=0$, $i\in[t]$. Then applying the trace function and using the $B$-linearity of the traces, it follows that for $i\in[t]$,
\begin{equation}\label{3}
\mathrm{Tr}(\lambda_{\alpha^*}p_{i}(\alpha^*)f(\alpha^*))=-\sum_{\alpha\in A\setminus \{\alpha^*\}}\mathrm{Tr}(\lambda_{\alpha}p_{i}(\alpha)f(\alpha)).
\end{equation}

By the definition of $p_i(x)$ one can see $p_i(\alpha^*)=u_i$ for $i\in[t]$, thus $\{p_i(\alpha^*)\}_{i\in[t]}$ forms a basis of $F$ over $B$. From Lemma \ref{lemma1} we know the left sides of the equations in (\ref{3}), i.e., $\{\mathrm{Tr}(\lambda_{\alpha^*}p_i(\alpha^*)f(\alpha^*))\}_{i\in[t]}$, suffices to recover $\lambda_{\alpha^*}f(\alpha^*)$, and thus the symbol $f(\alpha^*)$ since $\lambda_{\alpha^*}\neq0$.

Therefore, it is left to collect the right sides of the  equations in (\ref{3}) from the helper nodes.
By the definition of $p_i(x)$, $\mathrm{Tr}(\lambda_{\alpha}p_{i}(\alpha)f(\alpha))=\mathrm{Tr}(u_{i}(\alpha-\alpha^*))\mathrm{Tr}(\frac{\lambda_{\alpha}f(\alpha)}{\alpha-\alpha^*})$. Obviously, it is sufficient to download $\mathrm{Tr}(\frac{\lambda_{\alpha}f(\alpha)}{\alpha-\alpha^*})$ from the node storing $f(\alpha)$ for all $\alpha\in A\setminus\{\alpha^*\}$. So the total repair bandwidth is $n-1$ subsymbols in $B$.

\subsection{The cooperative repair model}
We simply recall the cooperative repair model introduced in \cite{Hu2010}. Suppose $r$ replacement nodes are to be generated to replace  $r$ failed nodes respectively. The process is accomplished in two phases:

In Phase 1, each replacement node connects to $d\ (\leq n-r)$ helper nodes and downloads $\beta_1$ subsymbols from each.

In Phase 2, the $r$ replacement nodes exchange $\beta_2$ subsymbols with each other.

Note that Phase 2 may be accomplished in multiple rounds. Here we assume synchronized and simultaneous channel, namely, all $r$ nodes can send data to others simultaneously in one round.
In the following, an $m$-round repair means a cooperative repair scheme that requires $m$-round comminication in Phase 2. Obviously, one-round repair is preferred with respect to the round complexity.


\section{Cooperative Repair of Two Erasures}\label{sec3}
A one-round repair scheme for two erasures in $\mathrm{RS_A}$ is designed in this section. As in \cite{Guruswami16,Dau18},  it assumes $n-k\geq|B|^{t-1}$.

Denote $K=\{x\in F:{\rm Tr}(x)=0\}$, and for any $\alpha,\beta\in F$ with $\alpha\neq\beta$, define
$$
K_{\alpha,\beta}=\{x\in F: \mathrm{Tr}((\alpha-\beta)x)=0\}.
$$
Obviously, $K_{\alpha,\beta}=\frac{K}{\alpha-\beta}$. Moreover, $K$ and $K_{\alpha,\beta}$ are both $(t-1)$-dimensional subspaces of $F$, and $K_{\alpha,\beta}=K_{\beta,\alpha}$.

WLOG, suppose the two symbols $f(\alpha_1)$ and $f(\alpha_2)$ are erased. we call the two replacement nodes that recover $f(\alpha_1)$ and $f(\alpha_2)$ respectively as node 1 and node 2 . For simplicity, hereafter we use the notation $K_{1,2}$ instead of $K_{\alpha_1,\alpha_2}$.

Let $\{u_1,..,u_{t-1}\}$ be a basis of $K_{1,2}$ over $B$. Choose $\delta\in F$ with $\mathrm{Tr}(\delta)=1$ and set $u_t=\frac{\delta}{\alpha_1-\alpha_2}$. Then $\{u_1,...,u_t\}$ form a basis of $F$ over $B$ because $u_t\in F\setminus K_{1,2}$. Choose a nonzero element $\gamma\in K$ and define $2t$ check polynomials:
$$
p_{i}(x)=\frac{\mathrm{Tr}(u_i(x-\alpha_1))}{x-\alpha_1}, \ \
q_{i}(x)=\frac{ \gamma \mathrm{Tr}(u_i(x-\alpha_2))}{x-\alpha_2}, \ \  i\in[t].
$$
It is easy to see that
for $i\in[t-1]$, $p_{i}(\alpha_1)=u_i$, $p_{i}(\alpha_2)=0$ because $u_i\in K_{1,2}$, and $p_{t}(\alpha_1)=u_t$, $p_{t}(\alpha_2)=\frac{1}{\alpha_1-\alpha_2}$. Similarly, $q_{i}(\alpha_2)=\gamma u_i$, $q_{i}(\alpha_1)=0$ for $i\in[t-1]$, and $q_{t}(\alpha_2)=\gamma u_t$, $q_{t}(\alpha_1)=\frac{ \gamma }{\alpha_1-\alpha_2}$.

In Phase 1,  node 1 and node 2 each downloads one subsymbol from each of the $n-2$ surviving nodes. Specifically, node 1 uses the $t$ check polynomials $p_{i}(x)$ to get $t$ check equations \footnote{For simplicity, we omit the multipliers $\lambda_i$ (or assume $\lambda_i=1$), $i\in[n]$, in the check equations because they are explicitly determined from $A$ and has no influence on the repair property.}:
\begin{equation*}
\mathrm{Tr}\big(p_{i}(\alpha_1)f(\alpha_1)\big)\!+\!\mathrm{Tr}\big(p_{i}(\alpha_2)f(\alpha_2)\big)
\!=\!-\!\sum_{\alpha\in A_{\!-\!1,\!-\!2\!}}\!\!\mathrm{Tr}\big(p_{i}(\alpha)f(\alpha)\big)
\end{equation*}
for $i\in [t]$, where $A_{-1,-2}=A\setminus\{\alpha_1,\alpha_2\}$.
Similarly, node 2 uses $q_{i}(x),i\in [t]$ to create $t$ check equations:
\begin{equation*}
\mathrm{Tr}\big(q_{i}(\alpha_1)f(\alpha_1)\big)\!+\!\mathrm{Tr}\big(q_{i}(\alpha_2)f(\alpha_2)\big)
\!=\!-\!\sum_{\alpha\in A_{\!-\!1,\!-\!2\!}}\!\!\mathrm{Tr}\big(q_{i}(\alpha)f(\alpha)\big).
\end{equation*}

As introduced in Section \ref{sec2-2}, node 1 and node 2 can derive the left sides of the equations by collecting the data needed to compute the right sides. The details are displayed in Table \ref{t1}.

\begin{table}[ht]
	\centering
\begin{tabular}{|c|c|c|}
	\hline & \makecell*[l]{download from\\ $f(\alpha),~\alpha\in A_{\!-\!1,\!-\!2\!}$} & obtain\\\hline
node 1 & $\mathrm{Tr}(\frac{f(\alpha)}{\alpha-\alpha_1})$ & $\begin{cases}
	\mathrm{Tr}(u_{j}f(\alpha_1)), \  j\in[t-1] \\
	\mathrm{Tr}(u_{t}f(\alpha_1))\!+\!\mathrm{Tr}(\frac{f(\alpha_2)}{\alpha_1-\alpha_2})
\end{cases}$\\\hline
node 2 & $\mathrm{Tr}(\frac{\gamma f(\alpha)}{\alpha-\alpha_2})$ &
$\begin{cases}
	\mathrm{Tr}(\gamma u_{j}f(\alpha_2)), \  j\in[t-1] \\
	\mathrm{Tr}(\gamma u_{t}f(\alpha_2))\!+\!\mathrm{Tr}(\frac{\gamma f(\alpha_1)}{\alpha_1-\alpha_2})
\end{cases}$\\\hline
\end{tabular}
\caption{Phase 1 of the cooperative repair for two erasures.}\label{t1}
\end{table}

That is, after Phase 1 each node (eg. node 1) obtains $t-1$ independent traces (eg. $\{\mathrm{Tr}(u_{j}f(\alpha_1))\}_{j\in[t-1]}$) and one mixed term (eg. $\mathrm{Tr}(u_{t}f(\alpha_1))\!+\!\mathrm{Tr}(\frac{f(\alpha_2)}{\alpha_1-\alpha_2})$). According to Lemma \ref{lemma1}, it needs one more independent trace to recover the erased symbol. We next show this can be accomplished in Phase 2 by exchanging one subsymbol between the two nodes.

 Since $\gamma\in K$, then $\frac{\gamma}{\alpha_1-\alpha_2}\in K_{1,2}={\rm Span}\{u_j:j\in[t-1]\}$. Thus $\mathrm{Tr}(\frac{\gamma f(\alpha_1)}{\alpha_1-\alpha_2})$ can be generated as a $B$-linear combination of $\{\mathrm{Tr}(u_{j}f(\alpha_1))\}_{j\in[t-1]}$ which have been obtained by node 1 in Phase 1. Therefore, node 1
 directly sends $\mathrm{Tr}(\frac{\gamma f(\alpha_1)}{\alpha_1-\alpha_2})$ to node 2 in Phase 2. It is easy to see that this transition can make node 2 finish the recovery.

On the other hand, suppose
\begin{equation}\label{key1}
\frac{1}{\alpha_1-\alpha_2}=\sum_{i=1}^{t}a_i\gamma u_{i}, \ \ a_i\in B.
\end{equation}
Then using $\{a_i\}_{i\in[t]}$ as linear combination coefficients of the $t$ terms node 2 obtains in Phase 1, then it can derive
$\mathrm{Tr}(\frac{f(\alpha_2)}{\alpha_1-\alpha_2})+a_{t}\mathrm{Tr}(\frac{\gamma f(\alpha_1)}{\alpha_1-\alpha_2})$ which is exactly the subsymbol that node 2 sends to node 1 in Phase 2. We will show this transition also makes node 1 finish its recovery.

Because subtracting this subsymbol from the mixed term obtained in Phase 1, node 1 gets $\mathrm{Tr}(u_{t}f(\alpha_1))\!-\!a_{t}\mathrm{Tr}(\frac{\gamma f(\alpha_1)}{\alpha_1-\alpha_2})=\mathrm{Tr}(\frac{\delta-a_{t}\gamma}{\alpha_1-\alpha_2}f(\alpha_1))$.
However,
\begin{equation}\label{key2}
\mathrm{dim}_{B}\{u_1,...,u_{t-1},\frac{\delta-a_{t}\gamma}{\alpha_1-\alpha_2}\}=t,
\end{equation}
which follows from $\delta-a_{t}\gamma\notin K$ and thus $\frac{\delta-a_{t}\gamma}{\alpha_1-\alpha_2}\notin K_{1,2}$. As a result, node 1 can recover $f(\alpha_1)$ by Lemmma \ref{lemma1}.

\begin{remark}\label{remark1}
Comparing with the one-round repair scheme given in \cite{Dau18}, we modify the check polynomials $q_i(x)$, $i\!\in\![t]$, by multiplying a parameter $\gamma\!\in\! K$. Consequently, in Phase 2 node 2 can eliminate the interference $\mathrm{Tr}(\frac{\gamma f(\alpha_1)}{\alpha_1-\alpha_2})$ with the help of node 1 because $\frac{\gamma}{\alpha_1-\alpha_2}\!\in\! K_{1,2}$, while the scheme in \cite{Dau18} realizes this elimination by assuming $\frac{1}{\alpha_1-\alpha_2}\!\in\! K_{1,2}$, thus the restriction ${\rm char}(F)\mid t$ is needed there.
Meanwhile, we further use the facts (\ref{key1}) and (\ref{key2}) to prove that node 1 can also accomplish the recovery after simultaneous exchange of the proper data in Phase 2. However, these two facts were not noticed in \cite{Dau18}, thus it only derives a two-round repair scheme even after the restriction ${\rm char}(F)\mid t$ is removed.
\end{remark}

\section{Cooperative Repair of Three Erasures}\label{sec4}

Now we discuss the cooperative repair of three erasures in $\rm RS_A$. As before, we require $n-k\geq |B|^{t-1}$.

For three distinct elements $\alpha_1, \alpha_2, \alpha_3\in F$, define
$$
K_{1,2,3}=\{x\in F: \mathrm{Tr}(\alpha_1x)=\mathrm{Tr}(\alpha_2x)=\mathrm{Tr}(\alpha_3x)\}.
$$
Then $K_{1,2,3}$ is a linear space over $B$, and it is easy to see $K_{1,2,3}\!=\!K_{1,2}\cap K_{2,3}\cap K_{1,3}$.
Actually, $K_{1,2,3}$ is the intersection of any two of the three spaces because $K_{i,j}\!\cap\! K_{i,k}\!\subseteq \! K_{j,k}$ for any $\{i,j,k\}\!=\![3]$.

WLOG, suppose the three symbols $f(\alpha_1), f(\alpha_2), f(\alpha_3)$ are erased and three replacement nodes called node 1, node 2, node 3 are to recover them respectively. Since our repair scheme depends on the dimension of the space $K_{1,2,3}$, we first derive two corollaries to describe the dimension.

\begin{lemma}\label{lemma0}
Suppose $K=\mathrm{Ker(Tr)}$, and $S$ is a subspace of $F$ with dimension $s$, then
$s-1\leq \mathrm{dim}_{B}(S\cap K)\leq s$.
\end{lemma}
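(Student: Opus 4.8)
The plan is to realize $S\cap K$ as the kernel of a single $B$-linear functional restricted to $S$, and then read off the dimension from the rank--nullity theorem. Concretely, since $\mathrm{Tr}=\mathrm{Tr}_{F/B}$ is $B$-linear, its restriction $\mathrm{Tr}|_{S}\colon S\to B$ is a $B$-linear map whose kernel is precisely $S\cap K$. The first observation I would record is that the image $\mathrm{Tr}(S)$ is a $B$-subspace of $B$, and $B$ is one-dimensional over itself, so $\dim_{B}\mathrm{Tr}(S)\in\{0,1\}$.

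Then, applying rank--nullity to $\mathrm{Tr}|_{S}$ gives
$$
\dim_{B}(S\cap K)=\dim_{B}(S)-\dim_{B}\mathrm{Tr}(S)=s-\dim_{B}\mathrm{Tr}(S),
$$
which therefore lies in $\{s-1,s\}$. This yields both bounds simultaneously: the upper bound $\dim_{B}(S\cap K)\le s$ corresponds to the case $\mathrm{Tr}(S)=\{0\}$ (equivalently $S\subseteq K$), and the lower bound $\dim_{B}(S\cap K)\ge s-1$ corresponds to $\mathrm{Tr}(S)=B$.

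There is essentially no obstacle here; the only point worth stating explicitly is that a $B$-subspace of the one-dimensional space $B$ has dimension $0$ or $1$, which is immediate. (Alternatively one could argue $\dim_{B}(S\cap K)\ge\dim_{B}S+\dim_{B}K-\dim_{B}F=s+(t-1)-t=s-1$ using that the trace is surjective, hence $\dim_{B}K=t-1$; but the functional/rank--nullity argument above is cleaner and does not need surjectivity of $\mathrm{Tr}$ invoked separately.)
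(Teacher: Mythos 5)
Your proof is correct and is essentially identical to the paper's: both restrict $\mathrm{Tr}$ to $S$, note that the image in $B$ has dimension at most $1$, and conclude via rank--nullity that $\dim_B(S\cap K)=\dim_B\mathrm{Ker}(\mathrm{Tr}|_S)\geq s-1$. No differences worth noting.
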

\begin{proof}
It is obvious that dim$_{B}(S\cap K)\leq s$. Let $\mathrm{Tr}|_{S}$ denote the trace function from $F$ to $B$ restricted to the subspace $S$. Then $\mathrm{Tr}|_{S}$ is a linear map from $S$ to $B$. Since $\mathrm{dim}_{B}\mathrm{Im}(\mathrm{Tr}|_{S})\leq1$, then $\mathrm{dim}_{B}\mathrm{Ker}(\mathrm{Tr}|_{S})$=$\mathrm{dim}_{B}(S\cap K)\geq s-1$.
\end{proof}
\begin{corollary}
$t-2\leq\mathrm{dim}_{B}K_{1,2,3}\leq t-1$.
\end{corollary}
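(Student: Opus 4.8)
The plan is to reduce this two-sided estimate to a single application of Lemma~\ref{lemma0}. Recall from Section~\ref{sec3} that each $K_{i,j}$ is a $(t-1)$-dimensional $B$-subspace of $F$ and that $K_{i,j}=\frac{K}{\alpha_i-\alpha_j}$, where $K=\mathrm{Ker}(\mathrm{Tr})$. Also, as observed just before the corollary, $K_{1,2,3}=K_{1,2}\cap K_{2,3}$.

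First I would set $S=(\alpha_2-\alpha_3)K_{1,2}$. Multiplication by the nonzero scalar $\alpha_2-\alpha_3$ is a $B$-linear bijection of $F$, so $S$ is again a subspace, of dimension $s=t-1$, and $\dim_{B}K_{1,2,3}=\dim_{B}\big((\alpha_2-\alpha_3)K_{1,2,3}\big)$. Next I would verify the identity $(\alpha_2-\alpha_3)K_{1,2,3}=S\cap K$: indeed $x\in K_{1,2,3}$ iff $x\in K_{1,2}$ and $x\in K_{2,3}=\frac{K}{\alpha_2-\alpha_3}$, i.e.\ iff $(\alpha_2-\alpha_3)x\in(\alpha_2-\alpha_3)K_{1,2}=S$ and $(\alpha_2-\alpha_3)x\in K$. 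Hence $\dim_{B}K_{1,2,3}=\dim_{B}(S\cap K)$.

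Now I would invoke Lemma~\ref{lemma0} with this $S$ of dimension $s=t-1$, obtaining $t-2\le\dim_{B}(S\cap K)\le t-1$, and therefore $t-2\le\dim_{B}K_{1,2,3}\le t-1$, as claimed.

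Since the argument is essentially a change of variables followed by Lemma~\ref{lemma0}, I do not expect any real obstacle; the only point requiring care is the bookkeeping in the identity $(\alpha_2-\alpha_3)K_{1,2,3}=S\cap K$ -- making sure the scaling is applied consistently to both factors $K_{1,2}$ and $K_{2,3}$, and using $K_{2,3}=\frac{K}{\alpha_2-\alpha_3}$ rather than some other difference. (Alternatively, one can avoid the scaling entirely by applying the rank--nullity bound directly to the linear functional $x\mapsto\mathrm{Tr}((\alpha_2-\alpha_3)x)$ restricted to $K_{1,2}$, whose kernel is exactly $K_{1,2,3}$; this gives the same bounds.)
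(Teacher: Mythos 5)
Your proposal is correct and follows essentially the same route as the paper: the paper also reduces to Lemma~\ref{lemma0} via $K_{1,2,3}=K_{1,2}\cap K_{2,3}$ and the observation that $|K_{1,2}\cap K_{2,3}|=|\frac{\alpha_2-\alpha_3}{\alpha_1-\alpha_2}K\cap K|$, which is exactly your rescaled subspace $S=(\alpha_2-\alpha_3)K_{1,2}$. Your write-up just makes the change-of-variables bookkeeping explicit.
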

The corollary follows from Lemma \ref{lemma0} and the facts $K_{1,2,3}=K_{1,2}\cap K_{2,3}$ and $|K_{1,2,3}|=|K_{1,2}\cap K_{2,3}|=|\frac{\alpha_2-\alpha_3}{\alpha_1-\alpha_2}K\cap K|$. Note that this corollary is also displayed in \cite{Dau18}.

\begin{lemma}\label{lemma3}
For any $\sigma\!\in\! F$, $\sigma K\!=\!K$ iff $\sigma\!\in \!B^*$, where $K\!=\!\mathrm{Ker(Tr)}$ and $B^*$ denotes the set of nonzero elements in $B$.
\end{lemma}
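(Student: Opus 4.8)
The plan is to prove both directions by a dimension count together with the multiplicativity of $K$ under scaling. The easy direction is immediate: if $\sigma\in B^*$, then for any $x\in K$ we have $\mathrm{Tr}(\sigma x)=\sigma\,\mathrm{Tr}(x)=0$ by $B$-linearity of the trace, so $\sigma K\subseteq K$; since $\sigma\neq 0$, multiplication by $\sigma$ is a bijection on $F$, hence $\dim_B(\sigma K)=\dim_B K=t-1$, forcing $\sigma K=K$.

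For the converse, I would argue contrapositively: suppose $\sigma\notin B^*$. If $\sigma=0$ then $\sigma K=\{0\}\neq K$, so assume $\sigma\in F^*\setminus B$. The key step is to show $\sigma K=K$ implies $\sigma\in B$, which I would do by iterating the hypothesis: if $\sigma K=K$, then $\sigma^j K=K$ for all $j\geq 0$, and more generally any $B$-linear combination of powers of $\sigma$ maps $K$ into $K$ once we know each $\sigma^j K\subseteq K$ — more precisely, the set $R=\{a\in F: aK\subseteq K\}$ is closed under addition and under multiplication (since $aK\subseteq K$ and $bK\subseteq K$ give $abK\subseteq aK\subseteq K$), contains $B$, and contains $\sigma$ by assumption, hence contains the subring $B[\sigma]$ generated by $\sigma$ over $B$. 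Because $F$ is a finite field, $B[\sigma]$ is itself a subfield of $F$ containing $B$. Thus $R$ contains a subfield $L$ with $B\subseteq L\subseteq F$ and $\sigma\in L$. Now every nonzero $a\in L$ satisfies $aK\subseteq K$, and so does $a^{-1}\in L$, giving $K\subseteq aK$; hence $aK=K$ for all $a\in L^*$.

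It remains to show $L=B$, equivalently that $L^*$ acting on $K$ by multiplication can only fix $K$ if $L=B$. Here I would use a counting/subspace argument: $K$ is an $L$-subspace of $F$? — not quite, since $K$ is only a priori a $B$-subspace, but $aK=K$ for all $a\in L^*$ together with $0\cdot K=\{0\}\subseteq K$ and additive closure shows $K$ is an $L$-subspace of $F$. Writing $[F:L]=m$ and $[L:B]=\ell$ so that $t=\ell m$, the $L$-dimension of $K$ is some integer $d$ with $\dim_B K = \ell d$, i.e. $t-1=\ell d$. Thus $\ell\mid t-1$; but also $\ell\mid t$ since $\ell=[L:B]$ divides $[F:B]=t$. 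Hence $\ell$ divides $\gcd(t,t-1)=1$, so $\ell=1$ and $L=B$, which gives $\sigma\in B$, the desired contradiction with $\sigma\notin B$.

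I expect the main obstacle to be the step asserting that $aK=K$ for all $a$ in the generated subfield $L$ promotes $K$ to an honest $L$-vector space; one must check additive closure is automatic (it is, since $K$ is already a $B$-space and hence an additive group) and that scalar multiplication by all of $L$, not merely $L^*$, preserves $K$ (the only extra case $a=0$ is trivial). Once that is in place, the divisibility punchline $\ell\mid\gcd(t,t-1)=1$ is the crux and is clean. An alternative, perhaps shorter route avoiding the subfield argument: note $\dim_B(K\cap\sigma^{-1}K)\geq t-2$ in general by Lemma \ref{lemma0}-type reasoning, and $\sigma K=K$ forces this to be $t-1$; combined with $\sigma\notin B$ one can derive a contradiction directly, but the subfield argument above is the more transparent one to write out.
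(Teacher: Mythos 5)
Your proof is correct, but the converse direction takes a genuinely different route from the paper. The paper argues by direct computation with a dual basis: it picks a basis $\{z_1,\dots,z_{t-1}\}$ of $K$, extends it to a basis $\{z_1,\dots,z_t\}$ of $F$, observes that the last dual-basis vector is $z_t^{\bot}=(\mathrm{Tr}(z_t))^{-1}\in B$, and then expands $\sigma=\sum_i \mathrm{Tr}(\sigma z_i)\,z_i^{\bot}$; the hypothesis $\sigma z_j\in K$ for $j\in[t-1]$ kills all but the last coordinate, leaving $\sigma=\mathrm{Tr}(\sigma z_t)(\mathrm{Tr}(z_t))^{-1}\in B^*$ in three lines. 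Your argument instead passes through the stabilizer ring $R=\{a\in F: aK\subseteq K\}$, notes that the subfield $L=B[\sigma]$ sits inside $R$ (a finite integral domain is a field), upgrades $K$ to an $L$-subspace, and concludes $[L:B]\mid\gcd(t,t-1)=1$. All the steps check out: closure of $R$ under addition and multiplication is routine, $aK=K$ for all $a\in L^*$ follows from applying the containment to both $a$ and $a^{-1}$, and the divisibility punchline is clean (using $\dim_B K=t-1$, which holds since the trace is a surjective $B$-linear functional). The trade-off: the paper's computation is shorter and in fact only needs the weaker hypothesis $\sigma K\subseteq K$ with $\sigma\neq 0$, whereas your structural argument is more conceptual and makes transparent \emph{why} the lemma is true --- the multiplicative stabilizer of a $B$-subspace is a subfield whose degree over $B$ must divide both $t$ and the subspace dimension, so nothing beyond $B$ can stabilize a subspace of dimension $t-1$. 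Either proof is acceptable; just be sure to state explicitly (as you did parenthetically) that the case $\sigma=0$ is excluded because $K\neq\{0\}$, which uses $t\geq 2$.
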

\begin{proof}
If $\sigma\in B^*$, it is obvious that $\sigma K=K$. Conversely, suppose $\sigma K=K$. It is evident that $\sigma\neq 0$. Let $\{ z_1,...,z_{t-1}\}$ be a basis of $K$ over $B$, then for $j\in[t-1]$, $\sigma z_j\in K$. We extend $\{z_j\}_{j\in[t-1]}$ to a basis of $F$ over $B$, denoted by $Z=\{ z_1,...,z_{t}\}$. Let $Z^\bot=\{ z_1^\bot,...,z_{t}^\bot\}$ be the dual basis of $Z$. We claim that $z_{t}^\bot=(\mathrm{Tr}(z_{t}))^{-1}$ since $\mathrm{Tr}(z_i(\mathrm{Tr}(z_{t}))^{-1})=0, i\in[t-1]$, and $\mathrm{Tr}(z_t(\mathrm{Tr}(z_{t}))^{-1})=1$. By the uniqueness of dual basis, it follows $z_{t}^\bot=(\mathrm{Tr}(z_{t}))^{-1}$. Now let $\sigma=\sum_{i=1}^{t}a_iz_i^\bot$, with $a_i\in B$. From Lemma \ref{lemma1}, we know $a_i=\mathrm{Tr}(\sigma z_i), i\in[t]$. Thus $a_j=0$ for $j\in[t-1]$ because $\sigma z_j\in K$, while $a_t=\mathrm{Tr}(\sigma z_t)\neq0$. Therefore, $\sigma=a_tz_t^\bot=\mathrm{Tr}(\sigma z_t)(\mathrm{Tr}(z_{t}))^{-1}\in B^*$.
\end{proof}

\begin{corollary}\label{corollary1}
 $\mathrm{dim}_{B}K_{1,2,3}=t-1$ iff $\frac{\alpha_i-\alpha_j}{\alpha_{i'}-\alpha_{j'}}\in B^*$, for any $i,j,i',j'\in\{1,2,3\}$ with $i\neq j$, $i'\neq j'$.
\end{corollary}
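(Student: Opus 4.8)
The plan is to reduce the statement to a single scalar condition and then exploit the symmetry of $K_{1,2,3}$. Recall from the text that $K_{\alpha,\beta}=\frac{K}{\alpha-\beta}$ and that $K_{1,2,3}$ equals the intersection of \emph{any} two of $K_{1,2},K_{2,3},K_{1,3}$. Writing $K_{1,2,3}=K_{1,2}\cap K_{2,3}=\frac{K}{\alpha_1-\alpha_2}\cap\frac{K}{\alpha_2-\alpha_3}$ and multiplying through by the nonzero field element $\alpha_1-\alpha_2$ — a bijective $B$-linear map, hence dimension preserving — I would obtain $\dim_B K_{1,2,3}=\dim_B\bigl(K\cap\sigma K\bigr)$ where $\sigma=\frac{\alpha_1-\alpha_2}{\alpha_2-\alpha_3}$.

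The core step is then the claim $\dim_B(K\cap\sigma K)=t-1\iff\sigma\in B^*$. For the ``if'' direction, $\sigma\in B^*$ gives $\sigma K=K$ by Lemma \ref{lemma3}, so $K\cap\sigma K=K$ has dimension $t-1$. For ``only if'', both $K$ and $\sigma K$ have dimension $t-1$; if $\dim_B(K\cap\sigma K)=t-1$ as well then $K\cap\sigma K=\sigma K$, i.e. $\sigma K\subseteq K$, and comparing dimensions forces $\sigma K=K$, whence $\sigma\in B^*$ again by Lemma \ref{lemma3}. (Only the trivial bound $\dim_B K_{1,2,3}\le t-1$ is used here, not the full estimate of Lemma \ref{lemma0}.)

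Finally I would upgrade ``$\sigma\in B^*$'' to the symmetric statement. Running the same computation with the other two representations $K_{1,2,3}=K_{1,2}\cap K_{1,3}$ and $K_{1,2,3}=K_{1,3}\cap K_{2,3}$ shows that $\dim_B K_{1,2,3}=t-1$ is equally equivalent to $\frac{\alpha_1-\alpha_2}{\alpha_1-\alpha_3}\in B^*$ and to $\frac{\alpha_1-\alpha_3}{\alpha_2-\alpha_3}\in B^*$. Hence, assuming $\dim_B K_{1,2,3}=t-1$, all three ratios of the ``edge differences'' $\alpha_1-\alpha_2,\ \alpha_2-\alpha_3,\ \alpha_1-\alpha_3$ lie in $B^*$; since $B^*$ is a multiplicative group (closed under inverses) and $-1\in B^*$, every $\frac{\alpha_i-\alpha_j}{\alpha_{i'}-\alpha_{j'}}$ — whose numerator and denominator are each $\pm$ one of those three differences — lies in $B^*$. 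The converse direction is immediate, since the hypothesis that all such ratios lie in $B^*$ contains in particular $\sigma\in B^*$, which by the core step gives $\dim_B K_{1,2,3}=t-1$.

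I do not anticipate a genuine obstacle; the only points needing care are the bookkeeping that scalar multiplication by a nonzero element of $F$ is a $B$-isomorphism (so all the dimension counts transfer cleanly), and the small combinatorial observation that every admissible ratio reduces, up to sign and inversion, to a ratio of two of the three edge differences — this last reduction is precisely what lets the single scalar condition $\sigma\in B^*$ be promoted to the fully symmetric conclusion.
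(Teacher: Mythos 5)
Your proof is correct and follows essentially the same route as the paper: both arguments reduce $\dim_B K_{1,2,3}=t-1$ to the coincidence of the spaces $K_{i,j}=\frac{K}{\alpha_i-\alpha_j}$ (you phrase this via $\dim_B(K\cap\sigma K)=t-1\Rightarrow\sigma K=K$, the paper via the chain $K_{1,2}=K_{2,3}=K_{1,3}$), and then invoke Lemma \ref{lemma3} to translate $\sigma K=K$ into $\sigma\in B^*$. Your explicit bookkeeping of the dimension count and of the sign/inversion reduction to the three edge differences is just a more detailed version of what the paper leaves to the reader.
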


\begin{proof}
One can verify that ${\rm dim}_{B}K_{1,2,3}\!=\!t\!-\!1\iff K_{1,2}\!=\!K_{2,3}\!=\!K_{1,3}\iff \frac{K}{\alpha_1-\alpha_2}\!=\!\frac{K}{\alpha_2-\alpha_3}\!=\!\frac{K}{\alpha_3-\alpha_1}\iff K\!=\!\frac{\alpha_i-\alpha_j}{\alpha_{i'}-\alpha_{j'}}K$ for any $i\!\neq\! j, i'\!\neq\! j'$ and $i,j,i',j'\!\in\!\{1,2,3\}$. By Lemma \ref{lemma3}, the corollary follows immediately.
\end{proof}


For simplicity, denote $l=\mathrm{dim}_{B}K_{1,2,3}$, then $l=t-1$ or $t-2$. Let $U=\{u_i\}_{i\in[l]}$, $V=\{v_i\}_{i\in[l]}$, $W=\{w_i\}_{i\in[l]}$ be three bases of $K_{1,2,3}$ over $B$ which can be the same basis. We extend $U$ to a basis of $K_{1,2}$, denoted by $\{u_i\}_{i\in[t-1]}$. Similarly, extend $V$ to a basis of $K_{2,3}$, denoted by $\{v_i\}_{i\in[t-1]}$ and extend $W$ to a basis of $K_{1,3}$, denoted by $\{w_i\}_{i\in[t-1]}$. Then we further extend them to three bases of $F$ over $B$, i.e., $U'=\{u_i\}_{i\in[t]}$, $V'=\{v_i\}_{i\in[t]}$, $W'=\{w_i\}_{i\in[t]}$. Next define $3t$ check polynomials, i.e., for $1\leq i\leq t$,
$$\begin{aligned}
p_{i}(x)&=\frac{ \mathrm{Tr}(u_i(x-\alpha_1))}{x-\alpha_1}, \ \ \ \ \
q_{i}(x)=\frac{\gamma_1 \mathrm{Tr}(v_i(x-\alpha_2))}{x-\alpha_2},\\
r_{i}(x)&=\frac{ \gamma_2 \mathrm{Tr}(w_i(x-\alpha_3))}{x-\alpha_3},
\end{aligned}$$
where $\gamma_1, \gamma_2$ are two nonzero elements in $F$. In the following, we illustrate the repair schemes in Section \ref{sub1} and  \ref{sub2} for $l=t-1$ and $l=t-2$ respectively.

\subsection{$l=t-1$}\label{sub1}
A one-round repair scheme is designed in this case.

Since
$K_{1,2,3}=K_{1,2}=K_{2,3}=K_{1,3}$ when $l=t-1$, we can set $u_t=v_t=w_t=\frac{\delta}{\alpha_1-\alpha_2}$, where $\delta$ is chosen from $F$ with $\mathrm{Tr}(\delta)=1$. By the definition of $p_i(x)$, $q_i(x)$ and $r_i(x)$, it has
\begin{equation*}
\begin{cases}
p_{j}(\alpha_1)=u_j,\ \ \ p_{j}(\alpha_2)=p_{j}(\alpha_3)=0, \ \ \ \ j\in[t-1],  \\
p_{t}(\alpha_1)=u_t, \ \ \ p_{t}(\alpha_2)=p_{t}(\alpha_3)=\frac{1}{\alpha_1-\alpha_2}; \\
\end{cases}
\end{equation*}
\begin{equation*}
\begin{cases}
q_{j}(\alpha_2)=\gamma_1 v_j,\ \  q_{j}(\alpha_1)=q_{j}(\alpha_3)=0,  \ \ \ \  j\in[t-1],  \\
q_{t}(\alpha_2)=\gamma_1 v_t, \ \  q_{t}(\alpha_1)=q_{t}(\alpha_3)=\frac{ \gamma_1 }{\alpha_1-\alpha_2};\\
\end{cases}
\end{equation*}
\begin{equation*}
\begin{cases}
r_{j}(\alpha_3)=\gamma_2 w_j,\ \  r_{j}(\alpha_1)=r_{j}(\alpha_2)=0, \ \ \ \  j\in[t-1],  \\
r_{t}(\alpha_3)=\gamma_2 w_t, \ \  r_{t}(\alpha_1)=r_{t}(\alpha_2)=\frac{ \gamma_2 }{\alpha_1-\alpha_2}.
\end{cases}
\end{equation*}
Note here
$p_{t}(\alpha_3)=\frac{1}{\alpha_3-\alpha_1}\mathrm{Tr}(\frac{(\alpha_3-\alpha_1)\delta}{\alpha_1-\alpha_2})
=\frac{1}{\alpha_1-\alpha_2}\mathrm{Tr}(\delta)
=\frac{1}{\alpha_1-\alpha_2}$,
where the second equality holds because it has  $\frac{\alpha_3-\alpha_1}{\alpha_1-\alpha_2}\in B^*$ from Corollary \ref{corollary1}. Other details of the computations are evident.

In Phase 1, the three nodes obtain some independent traces and mixed terms according to the check equations defined by $p_i(x),q_i(x), r_i(x)$ respectively, $i\in[t]$. As in Section \ref{sec3} we illustrate this process in Table \ref{t2}.
\begin{table}[ht]
	\centering
\resizebox{\columnwidth}{!}{
\begin{tabular}{|c|c|c|}
	\hline & \makecell[l]{download from $f(\alpha),$\\ $\alpha\in A_{\!-\!1,\!-\!2,\!-\!3}$} & obtain\\\hline
node 1 & $\mathrm{Tr}(\frac{f(\alpha)}{\alpha-\alpha_1})$ & $\begin{cases}
 \mathrm{Tr}(u_{j}f(\alpha_1)), ~~~~~j\in[t-1] \\
 \mathrm{Tr}(u_{t}f(\alpha_1))\!+\!\mathrm{Tr}(\frac{f(\alpha_2)}{\alpha_1-\alpha_2})\!+\!\mathrm{Tr}(\frac{f(\alpha_3)}{\alpha_1-\alpha_2})
\end{cases}$\\\hline
node 2 & $\mathrm{Tr}(\frac{\gamma_1 f(\alpha)}{\alpha-\alpha_2})$ &
$\begin{cases}
 \mathrm{Tr}(\gamma_1v_{j}f(\alpha_2)), ~~~~~ j\in[t-1] \\
 \mathrm{Tr}(\gamma_1v_{t}f(\alpha_2))\!+\!\mathrm{Tr}(\frac{\gamma_1f(\alpha_1)}{\alpha_1-\alpha_2})\!+\!\mathrm{Tr}(\frac{\gamma_1f(\alpha_3)}{\alpha_1-\alpha_2})
\end{cases}$\\\hline
node 3&$\mathrm{Tr}(\frac{\gamma_2 f(\alpha)}{\alpha-\alpha_3})$&$\begin{cases}
 \mathrm{Tr}(\gamma_2w_{j}f(\alpha_3)),  ~~~~~ j\in[t-1] \\
 \mathrm{Tr}(\gamma_2w_{t}f(\alpha_3))\!+\!\mathrm{Tr}(\frac{\gamma_2f(\alpha_1)}{\alpha_1-\alpha_2})\!+\!\mathrm{Tr}(\frac{\gamma_2f(\alpha_2)}{\alpha_1-\alpha_2})
\end{cases}$\\\hline
\end{tabular}}
\caption{Phase 1 of the cooperative repair for three erasures ($l=t-1$).}\label{t2}
\end{table}

Next we will show that by properly choosing the parameters $\gamma_1$ and $\gamma_2$, the recovery can be achieved in Phase 2 through one-round communication where each node sends a subsymbol to the other two nodes.

\begin{lemma}\label{lemma4}
Given $\gamma_1,\gamma_2\in F^*$, if the following equations on $x_i,y_i$, $i\in[3]$, have solutions in $B$:
\begin{equation}\label{eq1}
\begin{cases}
x_1+\mathrm{Tr}(\gamma_1^{-1}\gamma_2)y_1=\mathrm{Tr}(\gamma_1^{-1}),\\
\mathrm{Tr}(\gamma_1\gamma_2^{-1})x_1+y_1=\mathrm{Tr}(\gamma_2^{-1}),\\
\mathrm{Tr}(\gamma_1)x_1+\mathrm{Tr}(\gamma_2)y_1\neq1.
\end{cases}
\end{equation}
\begin{equation}\label{eq2}
\begin{cases}
x_2+\mathrm{Tr}(\gamma_2)y_2=\mathrm{Tr}(\gamma_1),\\
\mathrm{Tr}(\gamma_2^{-1})x_2+y_2=\mathrm{Tr}(\gamma_1\gamma_2^{-1}),\\
\mathrm{Tr}(\gamma_1^{-1})x_2+\mathrm{Tr}(\gamma_1^{-1}\gamma_2)y_2\neq1.
\end{cases}
\end{equation}
\begin{equation}\label{eq3}
\begin{cases}
x_3+\mathrm{Tr}(\gamma_1)y_3=\mathrm{Tr}(\gamma_2),\\
\mathrm{Tr}(\gamma_1^{-1})x_3+y_3=\mathrm{Tr}(\gamma_1^{-1}\gamma_2),\\
\mathrm{Tr}(\gamma_2^{-1})x_3+\mathrm{Tr}(\gamma_1\gamma_2^{-1})y_3\neq1.
\end{cases}
\end{equation}
then the recovery can be accomplished in Phase 2 by each replacement node exchanging one subsymbol with the other two nodes in one round.
\end{lemma}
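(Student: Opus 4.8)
The plan is to make the Phase-2 exchange completely explicit and then to check that the three conditions needed for each replacement node to complete its recovery are exactly the three lines of (\ref{eq1}), (\ref{eq2}) and (\ref{eq3}). Set $\theta:=(\alpha_1-\alpha_2)^{-1}$ and $(\sigma_1,\sigma_2,\sigma_3):=(1,\gamma_1,\gamma_2)$. Since $l=t-1$, Corollary \ref{corollary1} says that every $\alpha_i-\alpha_j$ equals $\theta^{-1}$ up to a factor in $B^*$, hence $K_{1,2,3}=K\theta$ and, as already arranged, $u_t=v_t=w_t=\delta\theta$ with $\mathrm{Tr}(\delta)=1$. Collecting the values $p_t(\alpha_m),q_t(\alpha_m),r_t(\alpha_m)$ already computed (each equal, up to the factor $\sigma_i$, to $\theta$ or to $\delta\theta$, using $\mathrm{Tr}(\delta)=1$ and Corollary \ref{corollary1}), Table \ref{t2} says that after Phase 1 node $i$ knows the full $(t-1)$-dimensional family $\{\mathrm{Tr}(\eta f(\alpha_i)):\eta\in\sigma_iK\theta\}$ together with the single mixed subsymbol
\[
M_i=\mathrm{Tr}\big(\sigma_i\delta\theta\,f(\alpha_i)\big)+\mathrm{Tr}\big(\sigma_i\theta\,f(\alpha_j)\big)+\mathrm{Tr}\big(\sigma_i\theta\,f(\alpha_k)\big),\qquad\{i,j,k\}=[3].
\]

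For Phase 2 I would have node $i$ send, to each node $m\neq i$, the subsymbol $s_{im}:=M_i+\mathrm{Tr}(\eta_{im}f(\alpha_i))$, where $\eta_{im}\in\sigma_iK\theta$ is a constant fixed below; this is a single subsymbol node $i$ can compute, being $M_i$ plus a $B$-linear combination of the clean traces it already holds, so the communication is exactly ``each node exchanges one subsymbol with the other two in one round.'' To recover $f(\alpha_i)$, node $i$ then forms $L_i:=M_i-x_is_{i_1 i}-y_is_{i_2 i}$, where $\{i_1,i_2\}=[3]\setminus\{i\}$ with $i_1<i_2$, the $s_{i_1 i},s_{i_2 i}$ are the two subsymbols it received, and $(x_i,y_i)\in B^2$ is a solution of the $i$-th system (so $L_1=M_1-x_1s_{21}-y_1s_{31}$, and similarly for nodes $2,3$). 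Expanding $L_i$ along $f(\alpha_1),f(\alpha_2),f(\alpha_3)$, one verifies: (i) for each $m\in\{i_1,i_2\}$, the $f(\alpha_m)$-component of $L_i$ collects the coefficient $\sigma_i\theta$ from $M_i$, a coefficient from $M_m$ carried by $s_{mi}$, and the free correction $\eta_{mi}$, and it can be driven to $0$ by a suitable $\eta_{mi}\in\sigma_mK\theta$; dividing out the ambient field factor and applying $\mathrm{Tr}$ (using $\mathrm{Tr}(\delta)=1$) turns the existence of such an $\eta_{mi}$ into precisely one of the two linear equations of the $i$-th system; (ii) the surviving $f(\alpha_i)$-component then equals $\mathrm{Tr}(\xi_if(\alpha_i))$ with $\xi_i=(\sigma_i\delta-x_i\sigma_{i_1}-y_i\sigma_{i_2})\theta$, and the requirement $\xi_i\notin\sigma_iK\theta$ becomes, after applying $\mathrm{Tr}$, exactly the inequation of the $i$-th system. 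Granting (i)--(ii), node $i$ holds $\mathrm{Tr}(u'f(\alpha_i))$ for all $u'$ in a basis of $\sigma_iK\theta$ together with $L_i=\mathrm{Tr}(\xi_if(\alpha_i))$ for $\xi_i\notin\sigma_iK\theta$, i.e.\ $t$ $B$-independent traces of $f(\alpha_i)$, so Lemma \ref{lemma1} finishes the recovery. All six constants $\eta_{im}$ are fixed offline from $\gamma_1,\gamma_2$ and the solutions $(x_i,y_i)$, and each $\eta_{im}$ enters only the system indexed by the receiver $m$, so the three nodes' prescriptions are mutually consistent with no circular dependency.

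The bulk of the work is the bookkeeping in (i)--(ii): writing out the three components of each $L_i$ and matching them, line by line, with (\ref{eq1})--(\ref{eq3}), keeping track of the factors $\mathrm{Tr}(\sigma_i^{-1}\sigma_{i_1}),\mathrm{Tr}(\sigma_i^{-1}\sigma_{i_2})$ and their inverses -- tedious but routine. The one genuine subtlety is that determining $\eta_{mi}$ in step (i) divides by $x_i$ or $y_i$, so one needs the chosen solution to have nonzero coordinates; I would note that if a system forces $x_i=0$ (or $y_i=0$), its linear part collapses to a relation that, by Lemma \ref{lemma3}, puts $\gamma_1$ or $\gamma_2$ into $B^*$ and makes two of the spaces $\sigma_1K\theta,\sigma_2K\theta,\sigma_3K\theta$ coincide -- a degenerate configuration excluded in the regime where the lemma is applied (or dispatched by a short direct argument). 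Finally, it is exactly because $x_i,y_i$ are required to lie in $B$ rather than merely in $F$ that the corrections $\eta_{mi}$ can be taken inside the $B$-subspaces $\sigma_mK\theta$; this is why the hypothesis is phrased as solvability over $B$.
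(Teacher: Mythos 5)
Your overall architecture matches the paper's: each node cancels the two cross-interference terms in its mixed subsymbol using data sent by the other two nodes, the two linear equations of each system are exactly the solvability conditions for those two cancellations inside the spaces $\sigma_mK\theta$, and the inequation is exactly the condition that the surviving coefficient $\frac{\delta-x_i\gamma_1-y_i\gamma_2}{\alpha_1-\alpha_2}$ (for node 1, say) lies outside $K_{1,2}$, giving the $t$-th independent trace via Lemma \ref{lemma1}. I checked your bookkeeping in (i)--(ii) and it does reproduce (\ref{eq1})--(\ref{eq3}) line by line, so the skeleton is sound and is essentially the paper's proof.

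There is, however, one genuine gap, and it is precisely the point you flag as a ``subtlety'' and then dismiss incorrectly. Your message format $s_{im}=M_i+\mathrm{Tr}(\eta_{im}f(\alpha_i))$ ties the clean-trace correction to the mixed term with the \emph{same} receiver-side weight: node 1 forms $M_1-x_1s_{21}-y_1s_{31}$, so if $x_1=0$ the correction $\eta_{21}$ is annihilated along with $M_2$, and the residual $f(\alpha_2)$-coefficient $(1-y_1\gamma_2)\theta$ cannot be cancelled unless $\gamma_2\in B^*$. Your escape --- that $x_i=0$ forces a degenerate configuration excluded where the lemma is applied --- is false: in the paper's own instantiation for $t\geq3$ one takes $\gamma_1,\gamma_2\in K^*$ with $\gamma_1^{-1}\gamma_2\in K$, and then (\ref{eq1}) forces $x_1=\mathrm{Tr}(\gamma_1^{-1})$, which can perfectly well vanish (nothing prevents $\gamma_1^{-1}\in K$ as well); Lemma \ref{lemma3} gives you no contradiction there. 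The fix is to decouple the two pieces, as the paper does: the \emph{sender} applies the weight to its mixed term and adds an unscaled combination of its clean traces, i.e.\ node 2 transmits $\sum_{i=1}^{t}a_i\mathrm{Tr}(\gamma_1v_if(\alpha_2))+a_t\mathrm{Tr}(\frac{\gamma_1f(\alpha_1)}{\alpha_1-\alpha_2})+a_t\mathrm{Tr}(\frac{\gamma_1f(\alpha_3)}{\alpha_1-\alpha_2})$, where the $a_i$ come from expanding $\frac{\gamma_1^{-1}-a_t\delta-b_t\gamma_1^{-1}\gamma_2}{\alpha_1-\alpha_2}$ in the basis $\{v_i\}_{i\in[t-1]}$ of $K_{1,2}$; this expression degenerates gracefully to a pure clean-trace combination when $a_t=0$ and node 1 simply subtracts the received subsymbols without rescaling. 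With that one repackaging your argument closes; without it, the case $x_i=0$ or $y_i=0$ is a real failure mode, not a dispensable degeneracy.
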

\begin{proof}
First, consider the repair of node 1. Suppose the equation (\ref{eq1}) has a solution $(a_t,b_t)\in B^2$, then we have $\mathrm{Tr}(\gamma_1^{-1}-a_t\delta-b_t\gamma_1^{-1}\gamma_2)=\mathrm{Tr}(\gamma_2^{-1}-b_t\delta-a_t\gamma_1\gamma_2^{-1})=0$ where we use the fact that $\mathrm{Tr}(\delta)=1$. As a result,
\begin{equation*}
\begin{cases}
\frac{\gamma_1^{-1}-a_t\delta-b_t\gamma_1^{-1}\gamma_2}{\alpha_1-\alpha_2}\in K_{1,2}\\
\frac{\gamma_2^{-1}-b_t\delta-a_t\gamma_1\gamma_2^{-1}}{\alpha_1-\alpha_2}\in K_{1,2}
\end{cases}\;.
\end{equation*}
Since $\{v_i\}_{i\in[t-1]}$ and $\{w_i\}_{i\in[t-1]}$ are two bases of $K_{1,2}$, let
\begin{equation}\label{7-}
\begin{cases}
\frac{\gamma_1^{-1}-a_t\delta-b_t\gamma_1^{-1}\gamma_2}{\alpha_1-\alpha_2}=\sum_{i=1}^{t-1}a_iv_{i}\\
\frac{\gamma_2^{-1}-b_t\delta-a_t\gamma_1\gamma_2^{-1}}{\alpha_1-\alpha_2}=\sum_{i=1}^{t-1}b_iw_{i}
\end{cases}
\end{equation}
where $a_i,b_i\in B$ for $i\in[t-1]$. Recall that $v_t=w_t=\frac{\delta}{\alpha_1-\alpha_2}$, combining with (\ref{7-}) we have
\begin{equation}\label{eq4}
\begin{cases}
\frac{1}{\alpha_1-\alpha_2}=\sum_{i=1}^{t}a_i\gamma_1v_{i}+b_t\frac{\gamma_2}{\alpha_1-\alpha_2}\\
\frac{1}{\alpha_1-\alpha_2}=\sum_{i=1}^{t}b_i\gamma_2w_{i}+a_t\frac{\gamma_1}{\alpha_1-\alpha_2}
\end{cases}\;.
\end{equation}
Using $\{a_i\}_{i\in[t]}$ and $\{b_i\}_{i\in[t]}$ as linear coefficients, node 2 and node 3  transmit the following two subsymbols to node 1 respectively:
\begin{equation}\label{eq5}
\begin{cases}
 \sum_{i=1}^{t}a_i\mathrm{Tr}(\gamma_1v_{i}f(\alpha_2))+a_t\mathrm{Tr}(\frac{\gamma_1f(\alpha_1)}{\alpha_1-\alpha_2})+a_t\mathrm{Tr}(\frac{\gamma_1f(\alpha_3)}{\alpha_1-\alpha_2}) \\
 \sum_{i=1}^{t}b_i\mathrm{Tr}(\gamma_2w_{i}f(\alpha_3))+b_t\mathrm{Tr}(\frac{\gamma_2f(\alpha_1)}{\alpha_1-\alpha_2})+b_t\mathrm{Tr}(\frac{\gamma_2f(\alpha_2)}{\alpha_1-\alpha_2})
\end{cases}.
\end{equation}
Then, node 1 subtracts the sum of the two subsymbols in (\ref{eq5}) from the mixed term it obtained  in Phase 1. Using the equalities in (\ref{eq4}) it gets $\mathrm{Tr}(u_{t}f(\alpha_1))\!-\!a_t\mathrm{Tr}(\frac{\gamma_1f(\alpha_1)}{\alpha_1-\alpha_2})\!-\!b_t\mathrm{Tr}(\frac{\gamma_2f(\alpha_1)}{\alpha_1-\alpha_2})
=\mathrm{Tr}(\frac{\delta-a_t\gamma_1-b_t\gamma_2}{\alpha_1-\alpha_2}f(\alpha_1)).$
In order to recover $f(\alpha_1)$, it is sufficient to show that $\{u_1,...,u_{t-1},\frac{\delta-a_t\gamma_1-b_t\gamma_2}{\alpha_1-\alpha_2}\}$ forms a basis of $F$ over $B$, or equivalently,  $\frac{\delta-a_t\gamma_1-b_t\gamma_2}{\alpha_1-\alpha_2}\notin K_{1,2}$. This follows from the third equation in (\ref{eq1}), i.e., $a_t\mathrm{Tr}(\gamma_1)+b_t\mathrm{Tr}(\gamma_2)\neq1$.

In a similar way, node 2 and node 3 can be recovered by considering equations (\ref{eq2}) and (\ref{eq3}) respectively. Due to space limitations we omit the details here.
\end{proof}

Now we are left to specify $\gamma_1$ and $\gamma_2$ such that (\ref{eq1}-\ref{eq3}) have solutions in $B$. There are three cases:

\subsubsection{\bm{$t\geq3$}}

Choose $\gamma_1\in K^*$, and $\gamma_2\in K^*\cap\gamma_{1}K$. We can do this because ${\rm dim}_B(K)=t-1$ and dim$_{B}(K\cap\gamma_{1}K)\geq t-2\geq 1$ from Lemma \ref{lemma0}. Then we have $\mathrm{Tr}(\gamma_1)=\mathrm{Tr}(\gamma_2)=\mathrm{Tr}(\gamma_1^{-1}\gamma_2)=0$. It is easy to verify that (\ref{eq1}-\ref{eq3}) are solvable in $B$.

\subsubsection{\bm{$t=2$} \textbf{and} \bm{${\rm char}(F)\neq3$}}

Choose $\gamma_1=\gamma_2\in K^*$. Then $\mathrm{Tr}(\gamma_1)=\mathrm{Tr}(\gamma_2)=0$, $\mathrm{Tr}(\gamma_1^{-1}\gamma_2)=\mathrm{Tr}(\gamma_1\gamma_2^{-1})=\mathrm{Tr}(1)=2$. It is easy to verify that (\ref{eq1}-\ref{eq3}) are solvable in $B$.

\subsubsection{\bm{$t=2$} \textbf{and} \bm{${\rm char}(F)=3$}}

Choose $\gamma_1=\gamma_2=1$. Since $\mathrm{Tr}(2)=1$, we can set $\delta=2$ in particular, then $u_t=v_t=w_t=\frac{2}{\alpha_1-\alpha_2}$. In this case, we give a straightforward way to complete the exchange phase without concerning the equations (\ref{eq1}-\ref{eq3}). Specifically, the three nodes directly exchange the mixed terms obtained in Phase 1 with each other. Then all of the three nodes can get:
\begin{eqnarray*}
&&\begin{pmatrix}
\mathrm{Tr}(u_{t}f(\alpha_1))+\mathrm{Tr}(\frac{f(\alpha_2)}{\alpha_1-\alpha_2})+\mathrm{Tr}(\frac{f(\alpha_3)}{\alpha_1-\alpha_2})\\
\mathrm{Tr}(\gamma_1v_{t}f(\alpha_2))+\mathrm{Tr}(\frac{\gamma_1f(\alpha_1)}{\alpha_1-\alpha_2})+\mathrm{Tr}(\frac{\gamma_1f(\alpha_3)}{\alpha_1-\alpha_2})\\
\mathrm{Tr}(\gamma_2w_{t}f(\alpha_3))+\mathrm{Tr}(\frac{\gamma_2f(\alpha_1)}{\alpha_1-\alpha_2})+\mathrm{Tr}(\frac{\gamma_2f(\alpha_2)}{\alpha_1-\alpha_2})
\end{pmatrix}\\
&=&\begin{pmatrix}
2&1&1\\
1&2&1\\
1&1&2
\end{pmatrix}\begin{pmatrix}
\mathrm{Tr}(\frac{f(\alpha_1)}{\alpha_1-\alpha_2})\\
\mathrm{Tr}(\frac{f(\alpha_2)}{\alpha_1-\alpha_2})\\
\mathrm{Tr}(\frac{f(\alpha_3)}{\alpha_1-\alpha_2})
\end{pmatrix}.
\end{eqnarray*}
Since the coefficient matrix is invertible, the three nodes can directly compute the $t$-th independent trace for recovery.
\subsection{$l=t-2$}\label{sub2}
A three-round repair scheme for $t>3$ is designed here. When $l=t-2$, $K_{1,2}$, $K_{2,3}$, $K_{1,3}$ are distinct.
We can choose $u_{t-1}\!\in\! K_{1,2}\!\setminus\! K_{2,3}$ with $\mathrm{Tr}(u_{t-1}(\alpha_2-\alpha_3))\!=\!1$. Similarly, choose $v_{t-1}\!\in\! K_{2,3}\!\setminus\! K_{1,3}$ with $\mathrm{Tr}(v_{t-1}(\alpha_3-\alpha_1))\!=\!1$  and $w_{t-1}\!\in\! K_{1,3}\!\setminus\! K_{1,2}$ with $\mathrm{Tr}(w_{t-1}(\alpha_1-\alpha_2))\!=\!1$.
Then we have
$$\begin{cases}
\mathrm{Tr}(u_{t-1}\alpha_1)=\mathrm{Tr}(u_{t-1}\alpha_2)=1+\mathrm{Tr}(u_{t-1}\alpha_3) \\
1+\mathrm{Tr}(v_{t-1}\alpha_1)=\mathrm{Tr}(v_{t-1}\alpha_2)=\mathrm{Tr}(v_{t-1}\alpha_3)\\
\mathrm{Tr}(w_{t-1}\alpha_1)=1+\mathrm{Tr}(w_{t-1}\alpha_2)=\mathrm{Tr}(w_{t-1}\alpha_3)
\end{cases}\;. $$
Moreover, we can set $u_t\!=\!w_{t-1}$ since $w_{t-1}\!\notin\! K_{1,2}$. Similarly, set $v_t=u_{t-1}$ and $w_t=v_{t-1}$.

The Phase 1 is the same with that of the case $l=t-1$, except that the data obtained here is a little different due to different selections of the bases $U',V',W'$. The details are illustrated in Table \ref{t3}.
\begin{table}[ht]
	\centering
\resizebox{\columnwidth}{!}{
\begin{tabular}{|c|c|c|}
	\hline & \makecell[l]{download from $f(\alpha),$\\ $\alpha\in A_{\!-\!1,\!-\!2,\!-\!3}$} & obtain\\\hline
node 1 & $\mathrm{Tr}(\frac{f(\alpha)}{\alpha-\alpha_1})$ & $\begin{cases}
 \mathrm{Tr}(u_jf(\alpha_1)), ~~~~~j\in[t-2] \\
 \mathrm{Tr}(u_{t-1}f(\alpha_1))\!-\!\mathrm{Tr}(\frac{f(\alpha_3)}{\alpha_3-\alpha_1})\\
 \mathrm{Tr}(w_{t-1}f(\alpha_1))\!+\!\mathrm{Tr}(\frac{f(\alpha_2)}{\alpha_1-\alpha_2})
\end{cases}$\\\hline
node 2 & $\mathrm{Tr}(\frac{\gamma_1 f(\alpha)}{\alpha-\alpha_2})$ &
$\begin{cases}
 \mathrm{Tr}(\gamma_1v_jf(\alpha_2)), ~~~~~ j\in[t-2] \\
 \mathrm{Tr}(\gamma_1v_{t-1}f(\alpha_2))\!-\!\mathrm{Tr}(\frac{\gamma_1f(\alpha_1)}{\alpha_1-\alpha_2})\\
 \mathrm{Tr}(\gamma_1u_{t-1}f(\alpha_2))\!+\!\mathrm{Tr}(\frac{\gamma_1f(\alpha_3)}{\alpha_2-\alpha_3})
\end{cases}$\\\hline
node 3&$\mathrm{Tr}(\frac{\gamma_2 f(\alpha)}{\alpha-\alpha_3})$&$\begin{cases}
 \mathrm{Tr}(\gamma_2w_jf(\alpha_3)),  ~~~~~ j\in[t-2] \\
 \mathrm{Tr}(\gamma_2w_{t-1}f(\alpha_3))\!-\!\mathrm{Tr}(\frac{\gamma_2f(\alpha_2)}{\alpha_2-\alpha_3})\\
 \mathrm{Tr}(\gamma_2v_{t-1}f(\alpha_3))\!+\!\mathrm{Tr}(\frac{\gamma_2f(\alpha_1)}{\alpha_3-\alpha_1})
\end{cases}$\\\hline
\end{tabular}}
\caption{Phase 1 of the cooperative repair for three erasures ($l=t-2$).}\label{t3}
\end{table}

Next we specify the choice of  $\gamma_1$ and $\gamma_2$ to make sure the recovery can be realized by communication in Phase 2. First choose $\gamma_2\in F^*$ such that $\frac{\gamma_2^{-1}}{\alpha_3-\alpha_1}\in K_{1,2,3}$. Then
choose $\gamma_1\in F^*$ such that $\gamma_1^{-1}\gamma_2\in \gamma_2(\alpha_1-\alpha_2)K_{1,2,3}\cap K$. We can do this because dim$_{B}\big(\gamma_2(\alpha_1-\alpha_2)K_{1,2,3}\cap K\big)\geq t-3\geq1$ by Lemma \ref{lemma0} and the assumption that $t>3$.
Then phase 2 proceeds in three rounds as displayed in Fig. \ref{fig:1}.
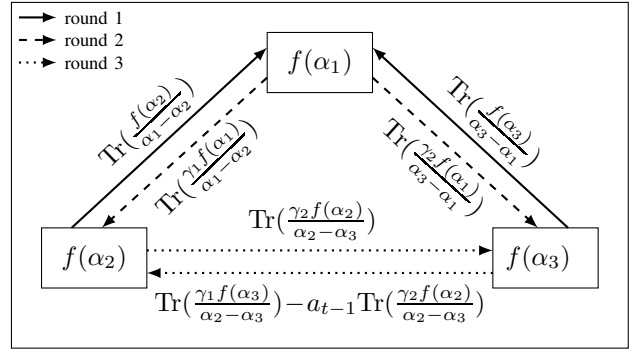
\begin{figure}[ht]
\begin{center}
\begin{tikzpicture}
\draw (0,0) rectangle (8.2,4.6);
\draw (0.4,0.8) rectangle (1.8,1.6);
\node at (1.1,1.2) {\bm$f(\alpha_2)$};
\draw (6.4,0.8) rectangle (7.8,1.6);
\node at (7,1.2) {\bm$f(\alpha_3)$};
\draw (3.4,3.4) rectangle (4.8,4.2);
\node at (4.1,3.8) {\bm$f(\alpha_1)$};

\draw[thick,->,>=latex] (0.8,1.6) -- (3.4,4);
\draw[thick,->,>=latex,dashed] (3.4,3.6) -- (1.2,1.6);
\draw[thick,->,>=latex] (7.4,1.6) -- (4.8,4);
\draw[thick,->,>=latex,dashed] (4.8,3.6) -- (7,1.6);
\draw[thick,->,>=latex,dotted] (1.8,1.3) -- (6.4,1.3);
\draw[thick,->,>=latex,dotted] (6.4,1) -- (1.8,1);

\node(a) at (1.8,3) [rotate=44] {$\mathrm{Tr}(\frac{f(\alpha_2)}{\alpha_1-\alpha_2})$};
\node(b) at (2.6,2.4) [rotate=44] {$\mathrm{Tr}(\frac{\gamma_1f(\alpha_1)}{\alpha_1-\alpha_2})$};

\node(c) at (6.4,3) [rotate=-42] {$\mathrm{Tr}(\frac{f(\alpha_3)}{\alpha_3-\alpha_1})$};
\node(d) at (5.6,2.4) [rotate=-42] {$\mathrm{Tr}(\frac{\gamma_2f(\alpha_1)}{\alpha_3-\alpha_1})$};

\node(e) at (4,1.7)  {$\mathrm{Tr}(\frac{\gamma_2f(\alpha_2)}{\alpha_2-\alpha_3})$};
\node(f) at (4.1,0.6){$\mathrm{Tr}(\frac{\gamma_1f(\alpha_3)}{\alpha_2-\alpha_3})\!-\!a_{t-1}\mathrm{Tr}(\frac{\gamma_2f(\alpha_2)}{\alpha_2-\alpha_3})$};

\draw[thick,->,>=latex] (0.1,4.4) -- (0.6,4.4);
\node at (1.1,4.4) {\scriptsize round 1};
\draw[thick,->,>=latex,dashed] (0.1,4.1) -- (0.6,4.1);
\node at (1.1,4.1) {\scriptsize round 2};
\draw[thick,->,>=latex,dotted] (0.1,3.8) -- (0.6,3.8);
\node at (1.1,3.8) {\scriptsize round 3};
\end{tikzpicture}
\end{center}
\caption{The exchange phase ($a_{t-1}$ comes from $\frac{\gamma_1}{\alpha_2-\alpha_3}=\sum_{i=1}^{t}a_i\gamma_2w_i$).}\label{fig:1}
\end{figure}

Specifically, round 1 can be achieved because $\frac{1}{\alpha_1-\alpha_2}\!\in\!\gamma_1K_{1,2,3}$ and $\frac{1}{\alpha_3-\alpha_1}\in\gamma_2K_{1,2,3}$ by the choice of $\gamma_1,\gamma_2$. Then $f(\alpha_1)$ can be recovered after round 1. As a result, round 2 can proceed. After round 2, node 2 recovers $\mathrm{Tr}(\gamma_1v_{t-1}f(\alpha_2))$ and node 3 recovers $\mathrm{Tr}(\gamma_2v_{t-1}f(\alpha_3))$. Then the two nodes exchange one subsymbol with each other in round 3. Actually, the round 3 is reduced to the Phase 2 when repairing two erasures. Since $\gamma_1^{-1}\gamma_2\!\in\! K$, a similar computation as that presented in Section \ref{sec3} is involved here for the final recovery.

\begin{remark}
In \cite{Dau18}, when $l=t-1$, a one-round repair scheme is designed restricting to ${\rm char}(F)\mid t$; when $l=t-2$, a three-round repair scheme is designed restricting to ${\rm char}(F)\mid t$ and $\{\frac{\alpha_1-\alpha_3}{\alpha_1-\alpha_2},\frac{\alpha_2-\alpha_1}{\alpha_2-\alpha_3},\frac{\alpha_3-\alpha_2}{\alpha_3-\alpha_1}\}\cap K\neq\emptyset$.
As in the repair of two erasures, we modify the check polynomials $q_i(x), r_i(x), i\in[t]$ by multiplying $\gamma_1$ and $\gamma_2$ respectively. Furthermore, we remove all the restrictions required in  \cite{Dau18} by properly choosing $\gamma_1, \gamma_2$.

\end{remark}
\section{Conclusions}\label{sec5}
We give an improved cooperative repair scheme for Reed-Solomon codes with two or three erasures, removing all the restrictions required in Dau et al.'s work \cite{Dau18}. An interesting problem in the future is to establish a lower bound on the repair bandwidth for cooperative repair of Reed-Solomon codes.

\end{document}